\begin{document}
\title{Global Rank Sum Test: An Efficient Rank-Based Nonparametric Test for Large Scale Online Experiment Platform}
%\author{Zheng Cai, Bo Hu, Zhihua Zhu}
%\author{Authors}
\author{Zheng Cai \\ Tencent Technology (Shenzhen) Co., Ltd. \\ zhengcai@tencent.com \and Bo Hu \\ Tencent Technology (Shenzhen) Co., Ltd. \\ lilzedbohu@tencent.com \and Zhihua Zhu \\ Tencent Technology (Shenzhen) Co., Ltd. \\ zhihuazhu@tencent.com}

% \author[1]{Zheng Cai}
% \author[2]{Bo Hu}
% \author[3]{Zhihua Zhu}
% \affil[1]{Tencent Technology (Shenzhen) Co., Ltd.}
% \affil[2]{Tencent Technology (Shenzhen) Co., Ltd.}
% \affil[3]{Tencent Technology (Shenzhen) Co., Ltd.}

\maketitle

\begin{abstract}
Online experiments are widely used for improving online services. While doing online experiments, The student t-test is the most widely used hypothesis testing technique. In practice, however, the normality assumption on which the t-test depends on may fail, which resulting in untrustworthy results.
In this paper, we first discuss the question of when the t-test fails, and thus introduce the rank-sum test. Next, in order to solve the difficulties while implementing rank-sum test in large online experiment platforms, we proposed a global-rank-sum test method as an improvement for the traditional one. Finally, we demonstrate that the global-rank-sum test is not only more accurate and has higher statistical power than the t-test, but also more time efficient than the traditional rank-sum test, which eventually makes it possible for large online experiment platforms to use.  
\end{abstract}

\section{Introduction}
%AB测试是金标准。线上实验的重要性，Google, 微软，FB，腾讯等大公司都在用。
%在线上实验的过程中，绝大多数情况下我们会使用t检验来检验策略是否真实存在效果。t检验是这么做的：（简单介绍一下，不需要公式，只需要说明需要利用指标的均值和方差。）接下来简要介绍一下一些指标无法满足常规假设检验的要求，例如某些指标不符合正态分布（描述不准确，应该是指标提升率或ATE）。
%一个实际的例子是，在我们的广告系统中，一类GMV指标由于与转化有关数据稀疏且极端值较多导致指标均值在很大的样本下也不符合正态分布。这中情况下，常规假设检验的结果往往会因为不符合检验的前提条件而出现偏差，非参检验（秩和检验）相比一般的假设检验对于处理/比较这类数据的结果会更可靠。

Online experiments, also known as A/B tests \cite{kohavi2009controlled}, are widely used for improving online services. Big companies, such as Facebook, Microsoft, Google, Tencent, run thousands of online experiments every day. In an online experiment, users are randomly assigned to one of the two variants: control and treatment, usually abbreviated as $c$ and $t$ respectively. Then after data from each group are collected, a specific hypothesis test will be carried out to decide if the treatment is better than the control. If the answer is yes, the treatment variant will be launched online in place of the control variant.

The \textit{student t-test} is the most widely used hypothesis testing technique. In a t-test, when the sample size is relatively large, we assume that the lift of metrics are normally distributed, and the normal quantiles are used to construct the corresponding confidence intervals. By the \textit{Central Limit Theorem} (CLT) \cite{Bill95}, this normality assumption is satisfied in most cases when the sample sizes are large. 
There are lots of early research discussing what sample size is sufficiently large enough for the t-test, such as \cite{Ratcliffe1968, Barrett1976, Sawilowsky1992, Sullivan1992, lumley2002}. In these research, they recommend around 100 samples are enough for the normality assumption to be valid.
In practice, however, even when the sample size is more than 10,000,000, the normality assumption may still be violated. We explain it in section 2 by demonstrating some empirical examples. We will show that in many real cases, the t-test which relying on this normality assumption has greater than expected type-\uppercase\expandafter{\romannumeral2} error  when this assumption does not hold. 
%Besides that, Goodson's article also mentions some of the shortcomings when conducting a t-test in similar situations \cite{goodson2014most}. 
%\sout{In these cases, we may fail to figure out positive strategies by using t-test since it rejects less hypothesis than expected, which results in bad impact on the decisions for launching those strategies.} 
Thus, t-test would fail to detect some significant improvements resulted in missing some good launches therefore impact users and businesses negatively.

%\sout{Sometimes, there's situation when regular hypothesis testing assumptions are violated. When these requirments are not satisfied, the ourcome would be biased, which will impact decisions and the growth of businesses. One of the key assumption is that the random variables being tested should be normally distributed. Although, when sample sizes are large, metric means are guaranteed to be normal by central limit theorem, in real world scenarios, some metrics does not look normal under practical large sample sizes. In our online ads platform, we have a group of metrics related with GMV(Gross Merchandise Value) to measure advertisers' value. These metrics are crucial for business but their distribution are not normall with lots of extreme values due to various purchasing behaviors from our users. In this situation, regular hypothesis tests are inaccurate, they will produce results with unstable type-I error depends on the distribution but nonparametric tests here might provide more robust results.} 

% 然而，在大型线上实验平台，一天内通常会运行成百上千个A/B实验来验证不同的策略，这些实验会因为具体实验范围以及其他条件的不同，导致所涉及到的人群也并不完全相同。分别对这些运行的实验进行排序并进行检验是对线上资源的一种挑战也同时极度消耗时间。我们通过工程上的优化和数学上的近似，使原本无法大规模使用的秩和检验方法可以应用在线上实验平台，提高实验衡量准确性。

In order to run hypothesis testing when the normality assumption is violated, we switched to using non-parametric tests. The rank-sum test, also known as Mann–Whitney test \cite{MUtest}, is one type of non-parametric tests which is often used to compare samples from two different groups. It uses ranks instead of each sample's value, thus it does not depend on the underlying data distribution. 
%\sout{one type of nonparametric tests, ranksum test, usually involves ranking the data and use statistics based on the ranks to test against some null hypothesis. } 
Although the rank-sum test avoids the problem when the normality assumption is not valid, 
%\sout{we still face the difficulties of implementation of the rank-sum tests.} 
there's still difficulties when implementing them in large scale experiment platforms. In large scale online experiment platforms, there's a number of experiments running simultaneously. Unlike the fact that while running t-tests, the metric value of a specific user stays the same for different experiments, the rank of the metric from that same user would vary across different experiments. This forces us to rerank sample metrics each time for all the running experiments, resulting in a huge demand for computational resources and making this task impossible in large scale settings.  
%\sout{When dataset is small, carrying out one of those nonparametric tests are convienient, but when the dataset is large, it's usually time and resource consuming to perform a nonparametric test. }
In reality, there are thousands of online experiments running on our AB testing platform each day, performing rank-sum test for each one of them separately would be extremely costly. 
%\sout{These experiments may involve different populations due to specific experiment scopes and other conditions. Testing these running experiments using nonparametric methods is a challenge to online resources and also extremely time-consuming.} 
%\sout{In this paper, we have used mathematical approximation and optimized the engineering aspect to enable nonparametric methods in this large scale settings, which were originally unable to be widely used in online experiment platforms.} 
As a result, we have mathematically simplified the rank-sum test so that we no longer need to rerank the samples in each experiment, making the application of the rank-sum test feasible in large online experiment platform settings.

%介绍一下文章的结构
The paper is structured as follows. In section 2 we will explain the reason why we need rank-sum test for hypothesis testing, combined with an empirical example. In section 3 we will explain the implementation difficulties of the rank-sum test in large online experiment platforms. In section 4 we introduce our new approach, Global-Rank-Sum test, which makes the application of the rank-sum test feasible and scalable. In section 5 we demonstrate some simulation and empirical results of our new method. Finally, section 6 concludes and makes further discussions.

\section{Motivation}
%本节我们会交代一些基础的统计学知识并结合一些数据来引出我们的问题，即为什么普通的假设检验方式会在一些情况下失败。然后我们通过一些实际的情况来阐述正常的非参数检验（Mann-Whitney U test）方式在大型线上实验平台不可行的原因，这些将会引出我们的方法，使用数学上的近似以及一些工程上的改造来完成适合大型线上实验平台的非参数检验方法。

In this section, we will show the motivation of using rank-sum test instead of t-test by explaining under what circumstances a t-test based on the normality assumption fails, and then illustrate it with some real data. 
%\sout{In this section, we will introduce our problem using basic statistical knowledge with some data, to explain why normal hypothesis testing methods may fail in some situations.} 
%\textcolor{red}{(Move the following red part to section 3:) Then we will further explain why the regular non-parametric tests (like Mann-Whitney U test) are not feasible in large-scale online experiment platforms, and this will eventually lead to our method later on, which uses mathematical approximation and some engineering modifications to complete a non-parametric test method suitable for large-scale online experiment platforms.}

First we review the main procedures of t-tests. In the t-test with $\text{H}_0:\mu_t=\mu_c$ against $\text{H}_1:\mu_t\neq \mu_c$, we calculate the sample mean of the control group $\overline{Y_t}$ and treatment group $\overline{Y_c}$ as the estimation of $\mu_t$ and $\mu_c$, respectively. Then we calculate the lift as $\Delta = \overline{Y_t}-\overline{Y_c}$, and the variance of lift as $$\sigma^2 = Var(\Delta)=Var(\overline{Y_t})+Var(\overline{Y_c})=\frac{Var(Y_t)}{N_t}+\frac{Var(Y_c)}{N_c}.$$
Finally we use the \textit{t-}statistics defined as $t = \Delta/\sigma$ for hypothesis testing: if
$|t|>z_{1-\alpha/2}$ then reject the null hypothesis $\text{H}_0$, otherwise accept it, where $1- \alpha$ is the level of confidence and $z_{1-\alpha/2}$ is the $1-\alpha/2$ quantile of the standard normal distribution $N(0,1)$. The reason why we use the normal quantile as the threshold is that we assume $\Delta$ follows the normal distribution, regardless of what distribution users' metrics follow. We call it the \textbf{normality assumption} and it holds from the \textit{Central Limit Theorem} (CLT), which states that suppose $X_1,\cdots,X_n$ are  independent and identically distributed random variables with expectation $\mu$ and finite variance $\sigma^2$, then $\sqrt{n}(\overline{X_n}-\mu)$ converge in distribution to a normal $N(0,\sigma^2)$. By CLT, the sample mean approximately follows the standard normal distribution as the sample size goes larger and larger.

However, in practice, the sample size of an experiment cannot be infinite, so the distribution of the sample mean may not be as close as the normal distribution to utilize some of the properties. Thus the distance between a sample mean distribution and a normal distribution  should be taken into consideration when making the normality assumption.
By Berry–Esseen theorem \cite{Berry1941}, this distance can be bounded by $\frac{C\cdot \rho}{\sigma^3\sqrt{n}}$, where $\rho = E(|X_1|^3)$ is the third absolute moment of $X_1$, and $C$ is a constant. When the original distribution of the metric is long-tailed that takes value in large quantities with more probability, the third absolute moment $\rho$ usually increases faster than $\sigma^3$, so that the distance between the two distributions goes larger.

To give an example, in Tencent Ads, we have got several metrics that are long-tailed. Sometimes the metric value of a few amount of users can be hundreds or even hunderds of thousands of the average value. In that cases, the distribution distance is not negligible even if the sample size is more than ten millions. For example, GMV(Gross Merchandise Value) is an important metric for business, which is defined as the value of users' conversion. Due to various purchasing behaviors of our users, the original distribution of GMV is long-tailed with extremely large values: many users' GMV is zero because they never convert, while some users make high-value purchases, so their GMV are high. Figure \ref{fig:1} shows the distribution of the sample mean for GMV with 10,000,000 samples. We can easily see how it differs from the normal distribution. Therefore, in cases like this, the normality assumption is violated so that the regular t-test based on it are also inaccurate. Empirically, the corresponding $1-\alpha$ confidence interval $(\mu - z_{1-\alpha/2}\cdot \sigma, \mu + z_{1-\alpha/2}\cdot \sigma)$ is wider than expected, thus if running t-test, we would reject fewer null hypotheses and fail to detect some significant improvements.

\begin{figure}
    \centering
    \includegraphics[scale=0.5]{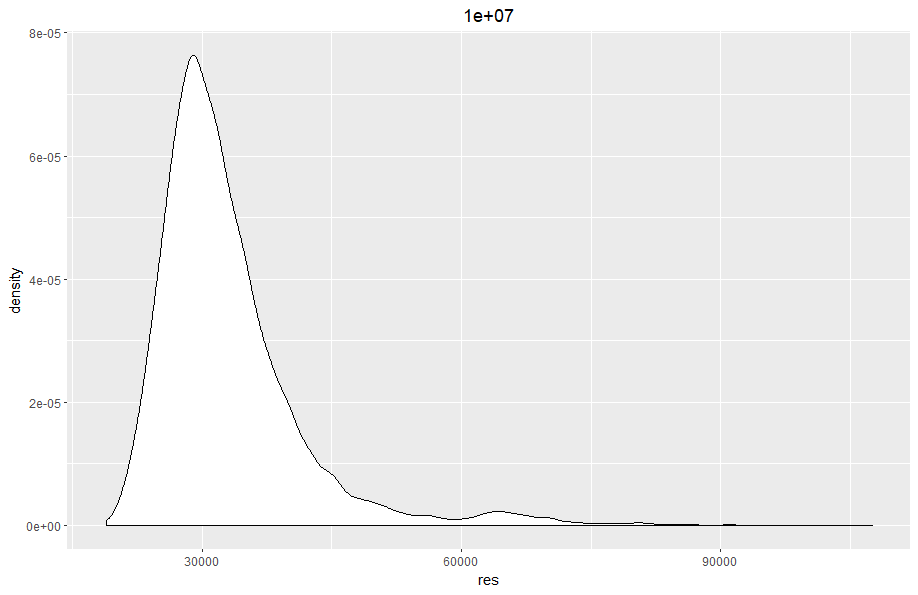}
    \caption{Distribution of sample mean of GMV with 10,000,000 users}
    \label{fig:1}
\end{figure}

In order to run hypothesis tests when the normality assumption is violated, we switched to using non-parametric tests. The rank-sum test is one type of non-parametric tests which is often used to compare samples from two different groups. It uses ranks instead of each sample's value in the rank-sum test, so it does not depend on the distribution of the metrics, which makes it more reliable than the t-test results in some cases.

In detail, let $1,\cdots,N$ be $N$ users, $X_1\cdots,X_N$ be each metric value, and $R_1,\cdots,R_N$ be each value' rank among these $N$ values so that $$R_i = \sum_{j=1}^N I(X_j\leq X_i).$$ 
Without loss of generality, we suppose there are no tied values so that the ranks $R_i$ are different from each other. In the presence of tied values, we may give each tied value a random rank as if we added a random minor perturbation to it to break the tie. 
Let $T_i$ be the group label that $T_i=t$ for user $i$ in the treatment group and $T_i=c$ for user $i$ in the control group. Then the rank-sum statistic \cite{MUtest} is defined as: 
$$rs = \frac{\overline{R_t}-\overline{R_c}}{\sqrt{\frac{N(N^2-1)}{12N_tN_c}}},$$
where $$N_t = \sum_{i=1}^N I(T_i=t),~~~~N_c = \sum_{i=1}^N I(T_i=c),$$
$$\overline{R_t}=\frac{1}{N_t}\sum_{i=1}^NR_iI(T_i=t),~~~~\overline{R_t}=\frac{1}{N_t}\sum_{i=1}^NR_iI(T_i=t).$$
The rank-sum statistic $rs$ can be used in testing the following hypothesis with $H_0$: samples from the two groups follow the same distribution. For a given type-\uppercase\expandafter{\romannumeral1} error $\alpha$, reject the hypothesis if $|rs|>z_{1-\alpha/2}$, accept it otherwise.

\section{Difficulties in Implementing the Rank-Sum Test in Online Experiment Platforms}

%\sout{In the case that the normality assumption is violated, non-parametric methods would be more reliable than the regular hypothesis testing procedures because it does not make assumptions on the underlying data distribution rather it relies somehow on the data itself.} 

In this section, we will introduce the difficulties in implementing the rank-sum test in large-scale online experimental platforms. Let us start with the following example.

\begin{table}
\centering
\begin{tabular}{|c|c|c|c|c|c|c|}
\hline
\multicolumn{1}{|c|}{\multirow{2}{*}{Index}} & \multicolumn{1}{c|}{\multirow{2}{*}{Metric Value}} & \multicolumn{1}{c|}{\multirow{2}{*}{Global Rank}} & \multicolumn{2}{c|}{Experiment 1}                            & \multicolumn{2}{c|}{Experiment 2}                              \\ \cline{4-7} 
\multicolumn{1}{|c|}{}                       & \multicolumn{1}{c|}{}                              & \multicolumn{1}{c|}{} & \multicolumn{1}{c|}{Group Label} & \multicolumn{1}{c|}{Rank} & \multicolumn{1}{c|}{Group Label} & \multicolumn{1}{c|}{Rank} \\ \hline
1                                            & 10                                                 & 4                                                 & t                                & 3                         &                                  &                           \\
2                                            & 9                                                  & 3                                                 & t                                & 2                         &                                  &                           \\
3                                            & 30                                                 & 10                                                & t                                & 6                         &                                  &                           \\
4                                            & 23                                                 & 8                                                 & c                                & 5                         &                                  &                           \\
5                                            & 19                                                 & 7                                                 & c                                & 4                         & t                                & 5                         \\
6                                            & 3                                                  & 1                                                 & c                                & 1                         & c                                & 1                         \\
7                                            & 5                                                  & 2                                                 &                                  &                           & t                                & 2                         \\
8                                            & 27                                                 & 9                                                 &                                  &                           & t                                & 6                         \\
9                                            & 15                                                 & 5                                                 &                                  &                           & c                                & 3                         \\
10                                           & 18                                                 & 6                                                 &                                  &                           & c                                & 4       \\\hline  
\end{tabular}
\caption{Example of traditional rank-sum test procedure}
\label{tbl:1}
\end{table}

As shown in table \ref{tbl:1}, suppose we have 10 samples in total, and each metric value is shown in the same table. Suppose we have two different experiments. Users 1 to 6 are in experiment 1 and users 5 to 10 are in experiment 2. In order to run t-test for each experiment, we only need to calculate the t-statistic by the corresponding metric value. But when we run the rank-sum tests, we need to sort the samples in each experiment before calculating the rank sum statistics. Since different experiments contain different sets of users, we need to sort the metrics twice. Let's say we have $E$ experiments, each containing $N$ samples, then we have to run some $N$-element sorting $M$ times. Since the time complexity of an $N$-element sorting is $o(N\log (N))$, the total time complexity is $o(E\cdot N\log (N))$.  In Tencent Ads, the sample size $N \approx 10^8$. Sorting such large crowd is often time-consuming. In the online experiment platform of Tencent Ads, such sorting usually takes about an hour. However, we have thousands of experiments running at the same time, and it would take thousands of hours to run the sorting for each experiment separately, which is unacceptable. Thus we have to find other ways to sort the samples for each experiment.

%一个直观的想法是，虽然每个样本在不同实验中的排名不同，但是它们之间的相对关系是不变的。因此可以按照以下两步进行：首先对所有的样本排序一遍，得到全局排序；其次对于每个实验，从全局排好的样本中依次筛选出来该实验对应的样本，从而获得样本在该实验中的排名。然而这种做法仍然还是比较麻烦，因为第二步需要执行$E$遍，并且第二步不方便进行分布式的计算。

An intuitive idea comes from the fact that although  the metric rank of a same user differs in different experiments, the relative rank among all samples is consistent. Therefore, we may simplify the sorting procedures by the following two steps. First, all samples are sorted to obtain a global rank. Second, for each experiment, we filter the samples in the experiment from the globally sorted samples in order. In that way, we could do an easier sort to obtain the rank of each sample for that specific experiment. However, this method is still time-consuming, because the second step runs $E$ times, and not convenient for distributed computing.

%\textcolor{blue}{Literature review if any: how do other companies deal with the problems?/ how do they implement non-parametric tests?}

% 非参检验在大型实验平台应用方向的research
%A nonparametric sequential test for online randomized experiments \ref{https://arxiv.org/pdf/1610.02490.pdf}
%在做这方面的探索前，我们试图寻找任何先前的文章关于这方面的信息，但我们发现对于非参数检验以及非参数检验在大型实验平台的应用的相关研究是少之又少的。虽然sequential testing to prevent type I error inflation is not this paper's topic，上面这篇文章详细阐述了一种非参数检验的方法在线上实验中的应用。这种方法通过使用bootstrap方式并不依赖数据的分布并且成功的用在了一些较复杂的指标上。然而在面对大量同时需要检验的线上实验情况下，这种方法可能也会遇到一些工程上的挑战导致无法即时或者高效的反馈结果给到实验人员。
Before exploring this area, we tried to find any previous research related to this topic, but there were very few studies on non-parametric tests nor on their applications in large-scale experiment platforms. 
%Although sequential testing to prevent type I error inflation \cite{johari2015always} is not the topic of this paper, Abhishek's article 
\cite{abhishek2017nonparametric} provides a detailed explanation of one type of non-parametric methods in online experiments. This testing method uses bootstrap \cite{bradley1993introduction} without relying on the distribution of data and has been successfully used on some complex metrics which provides some inspirations to our work. However, when faced with a large number of online experiments that need to be tested simultaneously, this method may also encounter some engineering challenges that prevent it from providing immediate and efficient result to experiment. 

%\sout{Then we will further explain why the regular non-parametric tests (like Mann-Whitney U test) are not feasible in large-scale online experiment platforms, and this will eventually lead to our method later on, which uses mathematical approximation and some engineering modifications to complete a non-parametric test method suitable for large-scale online experiment platforms.

%Unfortunately, another obstacle would arise in large online experiment platform settings, which is how to scale it to become more time and resource efficient. When carrying out rank based non-parametric methods with relatively large data size, sorting the data becomes the bottleneck of the process. Let along there's hundreds or thousands of tests running each day given a fixed amount of computing resources. Generally, it is not feasible to sort and perform the rank based tests for each experiment separately in large scale online experiment settings. 【To have an idea, on the internal experiment platform of Tencent Ads, which have a good amount of computing resources among the tech companies, it takes about two hours to perform a rank sum test using 20\% of the overall traffic (about 0.2 billion users).｜会信息数据敏感吗】Not to metion, there are thousands of experiments runing each day with various sample sizes. If running sequentially, it would take more than 200 hours to finish and some experiments contain larger sample sizes would require longer durations. Horizontally scaling computing machines could achieve faster speed but it's not cost efficient. }

\section{The Global-Rank-Sum Test}
\newtheorem{thm}{Theorem}
In this section we introduce our new method for rank-sum test. In our method, we only use the global rank of each sample for testing each experiment.

Let $\mathbb{I}= \{1,2,\cdots ,N\}$ be the total user population of size $N$. Let 
$\mathbb{J}\subseteq \mathbb{I}$ be the users involved in an experiment with sample size $M$. $\mathbb{J}$ is randomly split into the treatment group $\mathbb{J}_t$ and control group $\mathbb{J}_c$, with sample size $N_t$ and $N_c$, respectively, so that $N_t + N_c = M$. 
Let $T_i$ be $i$'s group: $T_i = t$ stands for user $i$ in the treatment group $\mathbb{J}_t$, $T_i = c$ stands for $i$ in the control group $\mathbb{J}_c$, and $T_i = o$ stands for other users not in the experiment. In summary,
\[
T_i = 
\begin{cases}
t, & i \in \mathbb{J}_t \\
c, & i \in \mathbb{J}_c\\
o, & i \in \mathbb{I}\backslash \mathbb{J}\\
\end{cases}.
\]
Let $X_i$ be user $i$'s metric value. Let $R_i$ be user $i$'s global rank among the total population $\mathbb{I}$, i.e., $$R_i = \sum_{j=1}^N I(X_j\leq X_i).$$

\begin{thm}
Given that samples are randomly split into the treatment group and control group, i.e., for some constant $0<p<1$, $P(T_i = t\mid T_i = t \text{ or } c) = p$ holds for all $i$, then under the null hypothesis $H_0$: samples in $\mathbb{J}_t$ and $\mathbb{J}_c$ follow the same distribution, the statistic
\[
grs =  \frac{\overline{R_t}-\overline{R_c}}{\sigma\cdot \sqrt{\frac{1}{N_t} + \frac{1}{N_c}}}\stackrel{d}{\rightarrow} N(0,1)
\]
as $N_t, N_c\rightarrow \infty$, where 
\[\overline{R_t} = \frac{1}{N_t} \sum_{i\in \mathbb{J}_t} R_i,~~~~~\overline{R_c} = \frac{1}{N_c} \sum_{i\in \mathbb{J}_c} R_i,~~~~\overline{R}=\frac{1}{M} \sum_{i\in \mathbb{J}} R_i,~~~~\sigma^2 = \frac{1}{M-1} \cdot \sum_{i \in \mathbb{J}}(R_i - \overline{R})^2.\]
\end{thm}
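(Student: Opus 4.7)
The plan is to reduce the asymptotic distribution of $grs$ under $H_0$ to a classical finite-population central limit theorem for sampling without replacement. The key observation is that under $H_0$ and the stated random-assignment condition, if we condition on the metric values of the \emph{entire} population (and hence on all global ranks $R_i$), the treatment set $\mathbb{J}_t$ becomes a simple random sample of size $N_t$ drawn without replacement from the fixed set $\mathbb{J}$ of size $M$. Consequently, $\overline{R_t}$ is the sample mean of $N_t$ draws without replacement from the finite population $\{R_i : i \in \mathbb{J}\}$, which has population mean $\overline{R}$ and population variance $\sigma^2$ as defined in the theorem.

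Next I would carry out the variance calculation. The classical finite-population sampling formula gives
\[
\mathrm{Var}(\overline{R_t}) = \frac{M - N_t}{M} \cdot \frac{\sigma^2}{N_t} = \frac{N_c}{M} \cdot \frac{\sigma^2}{N_t}.
\]
Since $N_t \overline{R_t} + N_c \overline{R_c} = M \overline{R}$ by definition, I get the identity $\overline{R_t} - \overline{R_c} = (M/N_c)(\overline{R_t} - \overline{R})$, which yields
\[
\mathrm{Var}(\overline{R_t} - \overline{R_c}) = \frac{M^2}{N_c^2} \cdot \frac{N_c}{M} \cdot \frac{\sigma^2}{N_t} = \sigma^2 \left( \frac{1}{N_t} + \frac{1}{N_c} \right),
\]
which matches the square of the denominator of $grs$. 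So the normalization in the statistic is exactly right, and all that remains is to establish the limit law.

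Finally I would invoke H\'ajek's central limit theorem for simple random sampling without replacement, which guarantees that the standardized sample mean $(\overline{R_t} - \overline{R})/\sqrt{\mathrm{Var}(\overline{R_t})}$ converges in distribution to $N(0,1)$ as long as both $N_t$ and $N_c$ tend to infinity and the Lindeberg-type condition
\[
\frac{\max_{i \in \mathbb{J}}(R_i - \overline{R})^2}{\sum_{j \in \mathbb{J}}(R_j - \overline{R})^2} \longrightarrow 0
\]
holds. Combined with the variance identity and the linear relation between $\overline{R_t} - \overline{R_c}$ and $\overline{R_t} - \overline{R}$, this yields $grs \stackrel{d}{\to} N(0,1)$.

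The main obstacle is verifying H\'ajek's condition, which is where the real work lies. Because the global ranks live in $\{1,\dots,N\}$, the numerator is at most $N^2$, while a non-degenerate spread of $\{R_i : i \in \mathbb{J}\}$ forces the denominator to grow like $M \cdot N^2$; ensuring this requires mild regularity on how the experiment population $\mathbb{J}$ sits inside the total population $\mathbb{I}$, for example that the empirical distribution of $\{R_i : i \in \mathbb{J}\}$ does not collapse to a point mass as $M \to \infty$. A secondary subtlety is whether the random assignment is Bernoulli, in which case $N_t$ is itself random, or complete, in which case $N_t$ is fixed; the calculation above assumes the latter, and the former can be accommodated by an additional conditioning-on-$N_t$ argument together with the law of large numbers $N_t/M \to p$.
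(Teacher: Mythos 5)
Your proposal is correct, and it diverges from the paper's proof in two instructive ways. For the variance, the paper works directly from the permutation distribution of $(R_{j_1},\dots,R_{j_M})$: it computes $Var(R_i)=\frac{M-1}{M}\sigma^2$ and $Cov(R_{i_1},R_{i_2})=-\frac{1}{M}\sigma^2$ from first principles and then assembles $Var(\overline{R_t}-\overline{R_c})=Var(\overline{R_t})+Var(\overline{R_c})-2Cov(\overline{R_t},\overline{R_c})$, which requires a separate (and in the paper only implicit) covariance computation between the two group means. Your route---quoting the finite-population formula $Var(\overline{R_t})=\frac{M-N_t}{M}\cdot\frac{\sigma^2}{N_t}$ and exploiting the deterministic identity $\overline{R_t}-\overline{R_c}=\frac{M}{N_c}(\overline{R_t}-\overline{R})$---reaches the same expression with less bookkeeping and makes it transparent why only one group's variance ever needs to be computed. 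For the limit law, your treatment is strictly more careful than the paper's: the paper simply writes ``by the CLT'' even though the summands in $\overline{R_t}$ are dependent (they are draws without replacement), so the i.i.d.\ CLT does not apply as stated; the correct tool is exactly the H\'ajek/combinatorial CLT you invoke, and the negligibility condition $\max_{i\in\mathbb{J}}(R_i-\overline{R})^2/\sum_{j\in\mathbb{J}}(R_j-\overline{R})^2\to 0$ that you flag is a genuine hypothesis that the paper never states or verifies. Your observation about Bernoulli versus complete randomization is also a real gap in the theorem as stated ($P(T_i=t\mid T_i=t\text{ or }c)=p$ suggests random $N_t$, while the proof conditions on fixed group sizes); the conditioning argument you sketch is the standard fix. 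In short, your proof is sound and, on the asymptotic step, more honest about where the work lies than the paper's own argument.
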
 
\begin{proof}
Denote $\mathbb{J}=\{j_1,\cdots,j_M\}$. Let $\mathbb{R} = \{R_i:~i\in \mathbb{J}\}$ be the rank of all users in the experiment.  Let $\mathbb{P}=\{(r_1,\cdots,r_M):\ (r_1,\cdots,r_M)\text{ is a permutation of }(R_{j_1},\cdots,R_{j_M})\}$ be the set of all the $M!$ permutations of $\mathbb{R}$. 
%We rearrange $\mathbb{R}$ by ascending order as $\mathbb{R} = \{r_1,r_2,\cdots, r_M\}$. 
Then under the null hypothesis, the ranks of all $R_{i}\in \mathbb{R}$'s takes values in $\mathbb{P}$ with the same probability, i.e.
\[P((R_{j_1},\cdots,R_{j_M})=(r_1,\cdots,r_M))=\frac{1}{M!}\]
holds for all $(r_1,\cdots,r_M)\in\mathbb{P}$. Thus, $P(R_i = r_j) = 1/M$ for all $i$ and $j$, and $E(R_i) = \frac{1}{M}\sum_{j=1}^M r_j$ holds for all $i$. Therefore,
\[E(\overline{R_t}-\overline{R_c}) = \frac{1}{M}\sum_{j=1}^M r_j - \frac{1}{M}\sum_{j=1}^M r_j = 0.\]
It suffices to prove that
\[
Var(\overline{R_t}-\overline{R_c}) = 
\left(\frac{1}{N_t} + \frac{1}{N_c}\right)\cdot \sigma^2,
\]
then by the CLT, we have
\[
grs = \frac{E(\overline{R_t}-\overline{R_c})}{\sqrt{E(\overline{R_t}-\overline{R_c})}} \stackrel{d}{\rightarrow} N(0,1),
\]
which completes the proof. The details of calculation of $Var(\overline{R_T}-\overline{R_C})$ is as follows.

For any $i$, we have
\[
Var(R_i) = E(R_i^2)-(E(R_i))^2=\frac{1}{M}\sum_{i=1}^Mr_i^2 - \left(\frac{1}{M}\sum_{i=1}^Mr_i\right)^2=\frac{1}{M}\sum_{i=1}^M(r_i-\overline{r})^2=\frac{M-1}{M}\sigma^2.
\]
For any $i_1\neq i_2$, we have
\[
E(R_{i_1}R_{i_2}) = \frac{1}{M(M-1)}\sum_{1\leq j_1\neq j_2\leq N}r_{j_1}r_{j_2} = \frac{1}{M(M-1)}\left(\left(\sum_{i=1}^Mr_i\right)^2-\sum_{i=1}^Mr_i^2\right),
\]
so that
\[
Cov(R_{i_1},R_{i_2})=E(R_{i_1}R_{i_2})-E(R_{i_1})E(R_{i_2})=\frac{1}{M(M-1)}\left(\left(\sum_{i=1}^Mr_i\right)^2-\sum_{i=1}^Mr_i^2\right)-\left(\frac{1}{M}\sum_{j=1}^M r_j\right)^2=-\frac{1}{M}\sigma^2.
\]
Thus,
\[
Var(\overline{R_t})=\frac{1}{N_t^2}Var\left(\sum_{i\in \mathbb{J}_t}r_i\right)=\frac{1}{N_t^2}\left(N_t\cdot Var(R_i) + N_t(N_t-1)\cdot Cov(R_{i_1},R_{i_2})\right)=\frac{N_c}{N_tM}\sigma^2,
\]
\[
Var(\overline{R_c})=\frac{1}{N_c^2}Var\left(\sum_{i\in \mathbb{J}_c}r_i\right)=\frac{1}{N_c^2}\left(N_c\cdot Var(R_i) + N_c(N_c-1)\cdot Cov(R_{i_1},R_{i_2}\right)=\frac{N_t}{N_cM}\sigma^2.
\]
Finally, we have
\[
Var(\overline{R_t}-\overline{R_c})=Var(\overline{R_t})+Var(\overline{R_c})-2Cov(\overline{R_t},\overline{R_c})=\frac{N_c}{N_tM}\sigma^2+\frac{N_t}{N_cM}\sigma^2+\frac{2}{M}\sigma^2=\left(\frac{1}{N_t} + \frac{1}{N_c}\right)\cdot \sigma^2
\]
\end{proof}
\section{Simulation Study}
\subsection{contrast type-I error of t-test and rank-sum tests}
In this subsection, we carry out simulations based on the log-normal distributed data so that each user's metric $Y_i$ is independently drawn from the distribution
\[log(Y_i) \sim N(\mu, \sigma^2)\]
with some fixed parameters $(\mu, \sigma)$. We set different choices of parameters as shown in Table \ref{tbl:2}. First, with each $(\mu, \sigma)$, we randomly generate a total population of $N$ users with $N$ fixed as $1,000,000$. Next, we simulate 5,000 experiments, each with $100,000$ users in the treatment group and $100,000$ users in the control group randomly selected from the total population. For each experiment, we run a t-test, a traditional rank-sum test, and a global-rank-sum test as we introduced in this paper. Finally, we summarised the proportion significant tests with type-\uppercase\expandafter{\romannumeral1} error $\alpha=0.01,0.05,0.1$. The results are summarised in Table \ref{tbl:2}.

\begin{table}
\centering
\begin{tabular}{|c|c|c|c|c|c|c|c|c|c|}
\hline
\multirow{2}{*}{$(\mu,\sigma)$} & \multicolumn{3}{c|}{t-test} & \multicolumn{3}{c|}{rank-sum test} & \multicolumn{3}{c|}{global-rank-sum test} \\ \cline{2-10} 
                               & $\alpha$=0.01  & $\alpha$=0.05  & $\alpha$=0.10  & $\alpha$=0.01  & $\alpha$=0.05  & $\alpha$=0.10  & $\alpha$=0.01  & $\alpha$=0.05  & $\alpha$=0.10   \\ \hline
(-3,3)                          & 0.42\%                         & 4.30\%                   & 9.52\%                  & 1.26\%                   & 5.44\%                   & 10.26\%                 & 1.26\%                   & 5.42\%                   & 10.24\%                 \\
(-3,4)                          & 0.06\%                         & 2.40\%                   & 7.84\%                  & 1.14\%                   & 5.32\%                   & 10.30\%                 & 1.16\%                   & 5.32\%                   & 10.30\%                 \\
(-3,5)                          & 0.06\%                         & 2.10\%                   & 6.72\%                  & 0.96\%                   & 5.18\%                   & 10.62\%                 & 0.96\%                   & 5.16\%                   & 10.64\%                 \\
(-3,6)                          & 0.02\%                         & 0.98\%                   & 4.82\%                  & 0.92\%                   & 5.38\%                   & 9.82\%                  & 0.92\%                   & 5.36\%                   & 9.84\%                  \\
(-3,7)                          & 0.04\%                         & 0.94\%                   & 4.58\%                  & 1.34\%                   & 5.48\%                   & 10.28\%                 & 1.32\%                   & 5.48\%                   & 10.30\%                 \\\hline
(-4,3)                          & 0.56\%                         & 4.38\%                   & 9.70\%                  & 1.10\%                   & 5.38\%                   & 10.54\%                 & 1.10\%                   & 5.38\%                   & 10.50\%                 \\
(-4,4)                          & 0.22\%                         & 2.96\%                   & 9.04\%                  & 1.40\%                   & 5.86\%                   & 10.46\%                 & 1.40\%                   & 5.90\%                   & 10.44\%                 \\
(-4,5)                          & 0.04\%                         & 1.16\%                   & 4.54\%                  & 0.96\%                   & 4.76\%                   & 9.60\%                  & 0.96\%                   & 4.74\%                   & 9.56\%                  \\
(-4,6)                          & 0.06\%                         & 2.32\%                   & 7.42\%                  & 1.02\%                   & 4.82\%                   & 9.68\%                  & 1.02\%                   & 4.82\%                   & 9.76\%                  \\
(-4,7)                          & 0.00\%                         & 1.50\%                   & 4.82\%                  & 1.04\%                   & 5.14\%                   & 9.90\%                  & 1.04\%                   & 5.12\%                   & 9.88\%                  \\\hline
(-5,3)                          & 0.44\%                         & 3.74\%                   & 9.26\%                  & 1.28\%                   & 5.52\%                   & 10.14\%                 & 1.28\%                   & 5.56\%                   & 10.16\%                 \\
(-5,4)                          & 0.34\%                         & 2.72\%                   & 6.96\%                  & 0.74\%                   & 4.64\%                   & 9.48\%                  & 0.76\%                   & 4.64\%                   & 9.52\%                  \\
(-5,5)                          & 0.06\%                         & 1.06\%                   & 4.72\%                  & 1.12\%                   & 5.48\%                   & 10.62\%                 & 1.10\%                   & 5.50\%                   & 10.62\%                 \\
(-5,6)                          & 0.12\%                         & 1.46\%                   & 4.28\%                  & 1.04\%                   & 4.94\%                   & 9.38\%                  & 1.04\%                   & 4.94\%                   & 9.42\%                  \\
(-5,7)                          & 0.02\%                         & 0.52\%                   & 3.16\%                  & 1.02\%                   & 5.16\%                   & 10.00\%                 & 1.02\%                   & 5.16\%                   & 9.98\% \\\hline
\end{tabular}
\caption{Type-\uppercase\expandafter{\romannumeral1} errors of t-test, rank-sum test and global-rank-sum test}
\label{tbl:2}
\end{table}

As the result shown in Table \ref{tbl:2}, when the distribution is extremely long-tailed, such as in the situation when $-\mu$ and $\sigma$ are relatively large, the t-test has less type-\uppercase\expandafter{\romannumeral1} error than expected. 
Figure \ref{fig:2} shows the distribution of the sample mean $\overline{Y_i}$ of $100,000$ samples, in which we can see that the distribution is not so close to the normal distribution.
\begin{figure}
    \centering
\begin{tabular}{cc}
   \includegraphics[scale=0.27]{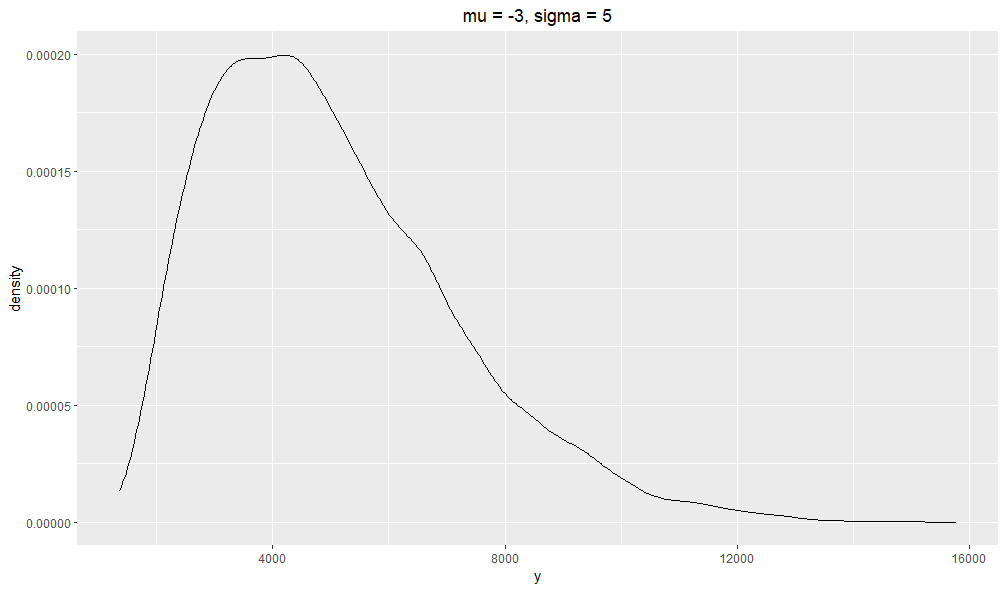}  & \includegraphics[scale=0.27]{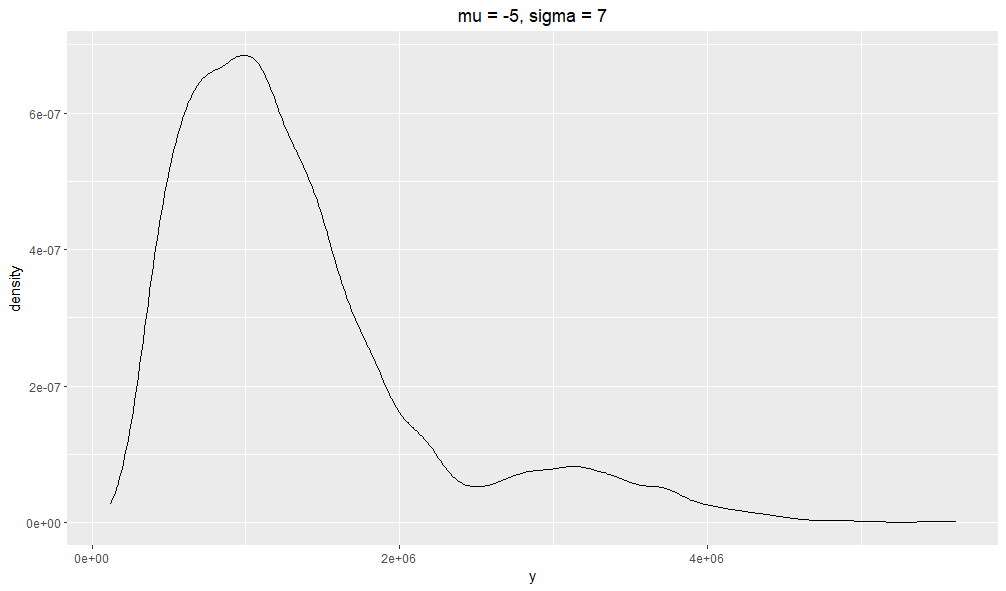} \\
   (a) $\mu=-3, \sigma=5$  & (b) $\mu=-5, \sigma=7$
\end{tabular}
    \caption{Distribution of sample mean when the original distribution is extremely long-tailed.}
    \label{fig:2}
\end{figure}
At the same time, even in the most extreme situations, we can see that the type-\uppercase\expandafter{\romannumeral1} error of the rank-sum test and global-rank-sum test are both very close to the prior $\alpha$, as we expected. Thus, in the extremely long-tailed situations, the rank-sum tests are more accurate than the t-test.

\subsection{contrast power of t-test and rank-sum tests}
We have shown that the type-\uppercase\expandafter{\romannumeral1} error of t-test is lower than expected in the extremely long-tailed situation, which indicates that the t-test rejects less hypothesis, resulting in higher type-\uppercase\expandafter{\romannumeral2} error. In this subsection, we will contrast the power of t-test and rank-sum tests.

Here we fix the parameters $(\mu,\sigma)=(-5,7)$, and set the user metrics $Y_i$ in the treatment group lifted by some fixed ratio, namely,
\[Y_i^{\text{obs}} = (1+\gamma \cdot I(T_i=t))\cdot Y_i,\] where $\gamma$ is the lift ratio. Other settings are identical to those in the previous simulation.

As shown in Table \ref{tbl:3}, we see the power of the rank-sum test and the global-rank-sum test are almost the same, and both of them are much greater than the power of the t-test. Combining this result with the contrast of type-\uppercase\expandafter{\romannumeral1} error, we demonstrate that the rank-sum test is more accurate and powerful than the t-test in the situation of extremely long-tailed distribution.

\begin{table}
\centering
\begin{tabular}{|c|c|c|c|c|c|c|c|c|c|}
\hline
\multirow{2}{*}{lift ratio $\gamma$} & \multicolumn{3}{c|}{t-test} & \multicolumn{3}{c|}{rank-sum test} & \multicolumn{3}{c|}{global-rank-sum test} \\ \cline{2-10} 
                               & $\alpha$=0.01  & $\alpha$=0.05  & $\alpha$=0.10  & $\alpha$=0.01  & $\alpha$=0.05  & $\alpha$=0.10  & $\alpha$=0.01  & $\alpha$=0.05  & $\alpha$=0.10   \\ \hline
1\%                         & 0.02\%  & 0.92\%  & 4.16\% & 1.20\%    & 6.34\%    & 11.72\%   & 1.20\%      & 6.34\%       & 11.72\%     \\
2\%                         & 0.00\%  & 0.62\%  & 2.98\% & 2.82\%    & 9.38\%    & 15.88\%   & 2.84\%      & 9.38\%       & 15.88\%     \\
3\%                         & 0.04\%  & 2.12\%  & 8.70\% & 4.94\%    & 15.28\%   & 24.74\%   & 4.96\%      & 15.28\%      & 24.72\%     \\
4\%                         & 0.02\%  & 0.64\%  & 4.44\% & 9.00\%    & 23.70\%   & 34.54\%   & 8.96\%      & 23.76\%      & 34.48\%     \\
5\%                         & 0.06\%  & 0.92\%  & 3.52\% & 14.66\%   & 33.56\%   & 45.52\%   & 14.68\%     & 33.52\%      & 45.54\%     \\
10\%                        & 0.02\%  & 0.84\%  & 4.16\% & 64.46\%   & 84.20\%   & 90.12\%   & 64.44\%     & 84.20\%      & 90.12\%     \\
20\%                        & 0.04\%  & 1.08\%  & 3.84\% & 99.88\%   & 100.00\%  & 100.00\%  & 99.88\%     & 100.00\%     & 100.00\%   \\\hline

\end{tabular}
\caption{Power of t-test, rank-sum test and global-rank-sum test}
\label{tbl:3}
\end{table}
\subsection{comparative speed analysis}
In the last subsection we contrast the time efficiency of the traditional rank-sum test and the global-rank-sum test. Although this comparison is intuitively obvious, since the global-rank-sum test only need to sort the data once instead of thousands of times, here we still confirm it with the simulation data. 

Table \ref{tbl:4} summarises the time cost of the two types of rank-sum tests. When there are only one experiment, the traditional rank-sum test costs less time than the global-rank-sum test, because it only needs to sort the samples in the experiment instead of the total population. As the number of experiments increases, the advantage in time efficiency of the global-rank-sum test over the traditional rank-sum test also increases. 

In summary, through the simulation of the type-\uppercase\expandafter{\romannumeral1} error, statistical power, and time efficiency, we can see that in the case of extremely long-tailed distribution, the rank-sum tests are more accurate and has higher statistical power than the t-test, and the global-rank-sum test is further more efficient than the traditional rank-sum test.

\begin{table}
    \centering
   \begin{tabular}{|c|c|c|c|}
   \hline
Number of & seconds of             & seconds of                    & \multirow{2}{*}{time cost diff ratio} \\
Experiments          & rank-sum tests & global-rank-sum tests &                      \\\hline
1         & 0.092                    & 0.386                           & 319.6\%              \\
10        & 0.784                    & 0.644                           & -17.9\%              \\
50        & 3.008                    & 1.415                           & -53.0\%              \\
100       & 6.183                    & 2.211                           & -64.2\%              \\
200       & 11.541                   & 3.748                           & -67.5\%              \\
500       & 30.057                   & 9.025                           & -70.0\%             \\\hline
\end{tabular}
    \caption{Time cost of traditional rank-sum test and global-rank-sum test}
    \label{tbl:4}
\end{table}

\section{Conclusion and Further Discussion}
In this paper, we first discuss the question of when the t-test fails, and thus introduce the rank-sum test. Next, in order to solve the implementing difficulties of the rank-sum test in the online experiment platform, we proposed a global-rank-sum test method as an improvement of the traditional rank-sum test. Finally, we demonstrate that the global-rank-sum test is not only more accurate and has higher statistical power than the traditional t-test, but also more efficient than the traditional rank-sum test, which makes it feasible to implement the rank-sum test in the online experiment platform.

%1.如果涉及到筛选维度，该方法也只需要在最细粒度排序一次即可满足所有的更高维度的筛选
%2.这类检验不能像参数检验一样给出一个均值作为参考，但后续可以使用中位数代替等
%【加个小总结】
Our designed global-rank-sum test does not require additional sorting given that the finest units are sorted (typically individual level). This is suitable and convenient for ad-hoc post experiment analysis involving data slicing or filtering, which is very common when carrying out AB tests and the test designers would like to see granular level results for example differences between two groups' people with specific gender, location or certain characteristics. In this situation, one could carry out a non-parametric test by picking the ranks of the selected people and repeat the procedure.
Also, non-parametric tests such as the rank-sum test could be a double edged sword. Unlike regular hypothesis testing procedures, which gives information about the underlying distribution (the mean and the variance), the non-parametric tests' statistics often are not as informative as its counterparts in a regular hypothesis test. A mediation metric, such as median or quantile, could be calculating some intermediate measures while performing the rank-sum tests. In addition, our new method only deals with the situation without tied values. In the situation with tied values, we put those users in random different rank to break the tie. A more refined approach would be to make a tie correction for the global-rank-sum statistic, which is worthy of further exploration.

\bibliographystyle{acm}
\bibliography{reference}
\end{document}